\newtheorem*{definition}{Definition}
\newtheorem{remark}{Remark}
\newtheorem{theorem}{Theorem}
\newtheorem{Lemma}[theorem]{Lemma}
\newtheorem{Proposition}[theorem]{Proposition}
\newtheorem{example}[theorem]{Example}
\definecolor{MRed}{rgb}{1,.6,.6}
\title{On Optimal Family of Codes for Archival DNA Storage}
\author{
\IEEEauthorblockN{Dixita Limbachiya, Vijay Dhameliya, Madhav Khakhar and Manish K. Gupta}
\IEEEauthorblockA{\\
Dhirubhai Ambani Institute of Information and Communication Technology\\
Gandhinagar, Gujarat, 382007 India\\
Email: dlimbachiya@acm.org and mankg@computer.org\\
}
}
\begin{document}
\maketitle
\IEEEpeerreviewmaketitle

\begin{abstract}
\footnote{
Dixita, Vijay and Madhav contributed equally. Corresponding author: Manish K. Gupta, email: mankg@computer.org.  Recently a re-writable DNA based storage system has been reported in \cite{rewritableDNAoligica}.
}
DNA based storage systems received attention by many researchers. This includes archival and re-writable random access DNA based storage systems.
In this work, we have developed an efficient technique to encode the data into DNA sequence by using non-linear families of ternary codes. In particular, we proposes an algorithm to encode data into DNA with high information storage density and better error correction using a sub code of Golay code. Theoretically, 115 exabytes (EB) data can be stored in one gram of DNA by our method.
\end{abstract}


\section{Introduction}
Big data explosion due to advancement in social networking and IoT (Internet of Things) produces immense data, which urge the data scientists to strive for the development of better data storage medium. Optical, digital and cloud data storage \cite{dimakis2010network} medium have their own limitations and need to be maintained regularly. While the computer scientists are endeavoring to develop dense data storage medium, researchers at the other end thought of exploring natural medium to safeguard this data.
Properties of DNA like scalability, density and long term stability makes it ideal for long term archival of data. A good deal of work has been done to store the data on DNA \cite{davis1996microvenus} \cite{cox2001long} \cite{wong2003organic} \cite{church2012next}. 
By employing error correction for DNA data storage, the breakthrough was made by N. Goldman and his team  \cite{goldman2013towards} in the year $2013.$ They developed very efficient and novel approach for storing the data of size $739$ KB on DNA and retrieving it back. They used four folds redundancy to tackle errors by retrieving correct data from one of the copies of DNA strands. But due to redundancy there was increase in the length of DNA which makes the technique expensive to use DNA for storage at the commercial level. In the year $2014,$ group at The Chinese University of Hong Kong developed method for larger DNA reads used to embed the data \cite{yim2014essential}.  Low Density Parity-Check (LDPC) codes based method was introduced by Aldrin Kay-Yuen Yim et al.,  to encode data in DNA for improved length of DNA reads. Most of the methods used so far have limitations in size of the data inserted in DNA and cost associated with DNA synthesis and sequencing technology.  Work of Goldman \cite{goldman2013towards} and Church \cite{church2012next} has laid a cornerstone for DNA based data storage systems. Recently model of DNA channel for data storage is proposed by Han Mao Kiah et al., \cite{KiahPM14}. Also a rewritable random access DNA based storage system has been very recently reported \cite{rewritableDNAoligica}. In \cite{grass2015robust}, Reed Solomon codes based DNA storage system has been reported. In this work, we have developed error correction scheme (a family of non linear ternary codes) by modifying Goldman's scheme \cite{goldman2013towards} for the error detection and correction along with improvement in storage capacity to store data on DNA. 

The paper is organized as follows. Section 2 describes briefly related work and the Golddman's approach. Section 3 gives our work on non-linear ternary codes. Section 4 discuss the analysis of class DNA Golay code sub code. Final section 5 concludes the paper with general remarks. 
\section{Error correction for DNA storage}
There are various error correction schemes used for DNA based data storage systems \cite{smith2003some} \cite{yachie2007alignment} \cite{haughton2011repetition} but the effective schemes was first introduce by Church \cite{church2012next} and Goldman \cite{goldman2013towards}.
Using next generation synthesis and sequencing technology, Church came up with efficient one bit per base algorithm  of encoding information bits into fix length of DNA chunks (99 bases). Flanking primers at the beginning and end of information data was inserted to identify the specific DNA segment in which the particular data was encoded.
This method includes homo polymer repeated sequences that cause errors while writing and reading the data. This difficulty was taken care by Goldman's approach for DNA based data storage in $2013.$ Goldman used one bit per base system introduced by Church and modifying it by employing the improved base $3$ Huffman code (trits 0, 1 and 2) encoding scheme.
In this original file type to binary code which is then converted to a ternary code  which is in turn converted to the triplet DNA code. It involved four steps shown in Fig \ref{AdvancedDNASchemetic}. Binary digits holding the ASCII codes was converted to base-3 Huffman code that replaces each byte with five or six base-3 digits (trits). Each of trit was encoded with one of the three nucleotides different from the previous one used (to avoid homopolymers that cause error in the synthesis of DNA). DNA strand was divided into chunks each of length $117$ base pair (bp). To include redundancy for error detection and correction, $75$ bases for each DNA information chunks were overlapped for  four fold redundancy to get the data loss that occurred during synthesis and sequencing DNA. For the data security each redundant chunk was converted to reverse complement of the strand in every alternate chunks. Each DNA chunk was appended with an indexing information bit to determine the location of the data in each chunk and the file from which it is generated. At the end parity check bit was added for error detection. For details the reader is referred to \cite{goldman2013towards}. We modified the Goldman's scheme further as described in next section. 
%

\begin{figure}[h]
  \begin{center}
    \scalebox{0.32}[0.35]{\includegraphics{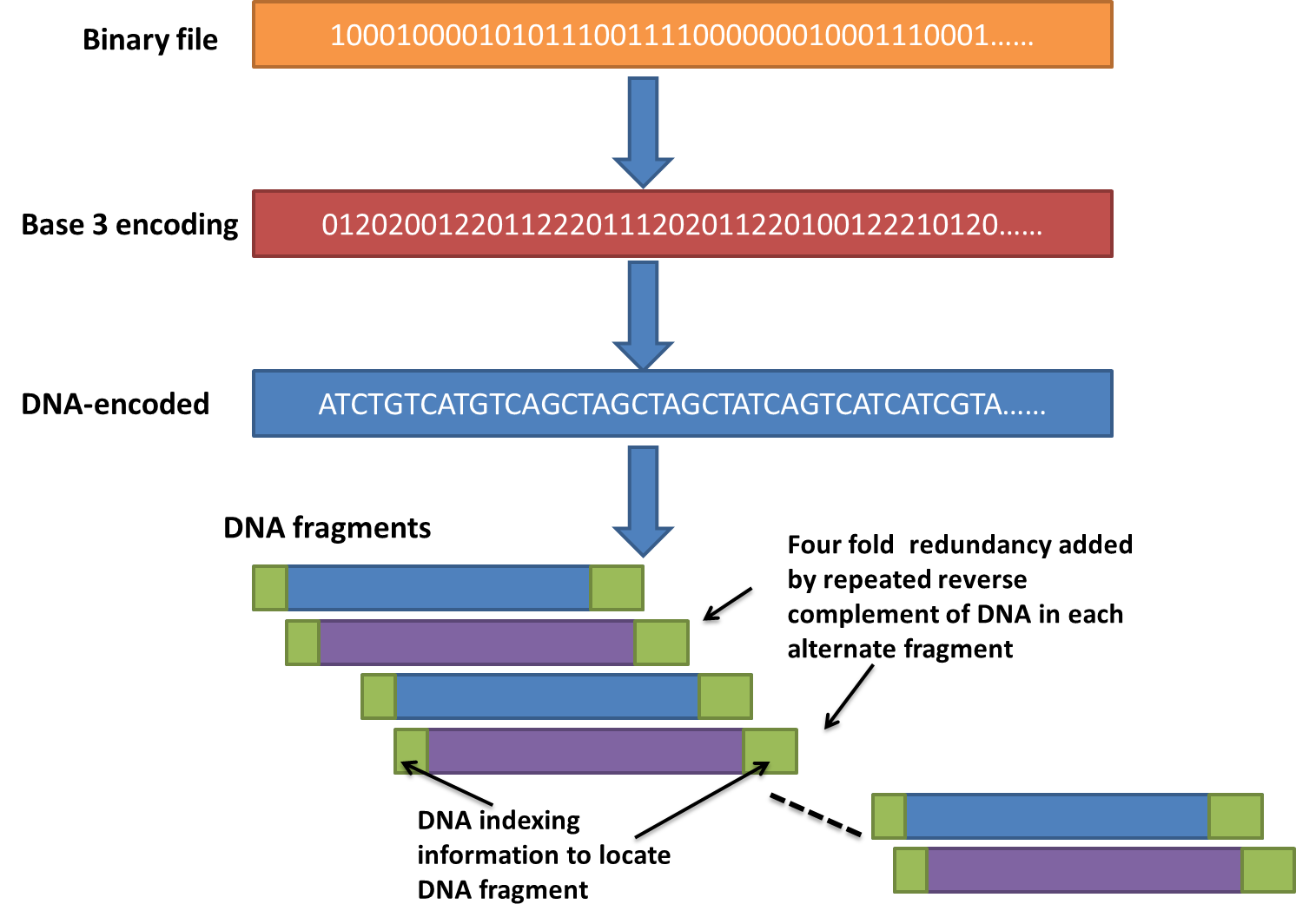}}
    \caption{Stepwise encoding of data into DNA using Goldman's approach is explained in detail. In this original file type to binary code (0, 1) which is then converted to a ternary code (0, 1, 2), which is in turn converted to the DNA code. Four fold redundancy is added to each DNA chunk by introducing reverse complement to every alternate DNA chunk. DNA chunk index is addedto each DNA chunk to locate the file.}
    \label{AdvancedDNASchemetic}
  \end{center}
\end{figure}
\section{Family of Non-Linear Ternary Codes}
Goldman used four fold redundancy due to which was increase in the length of the DNA used to encode data which makes the DNA based storage medium more expensive. In this work, non-linear ternary error correction codes is used instead of Huffman codes as referred in Fig \ref{golaycode}. In DNA we have only three nucleotides different from the last one. So it is inescapable to have codewords in base-3 in order to avoid homopolymers runs \cite{goldman2013towards}. The non-linear ternary codes described here improve the storage capacity of DNA by decreasing the length of the DNA required for storing a file. Any arbitrary computer file can be converted into list of ASCII values ranging from $0$ to $255$. So we need set of $256$ ternary codewords, each corresponding to one value in $\{0,..,255\},$ to encode any such file into DNA string. From exhaustive search and enumeration of codewords we identified and constructed seven families of non linear ternary codes with the parameters $(9, 256, 3)_{3}$, $(11, 256, 5)_{3}$, $(15, 256, 7)_{3}$, $(18, 256, 9)_{3}$, $(21, 256, 11)_{3}$, $(24, 256, 13)_{3}$ and $(26, 256, 15)_{3}$ respectiveley \cite{codetables}. Among these, $(11, 256, 5)_{3}$ is a sub code of ternary Golay code (listed in the Table \ref{DNA golay code})  \cite{2401740}. We discuss encoding using the subcode $(11,256,5)_{3}$ in detail. The ternary Golay code consists of $729$ codewords of length $11,$ with minimum hamming distance $5.$ So it allows receiver to identify 4 trits of errors and correct 2 trits of errors that occur in codeword.
The steps b and c of the algorithm shown in Fig \ref{AdvancedDNASchemetic} were modified by using ternary Golay codes instead of Huffman codes (see Fig \ref{DNAgolay}). Each byte of a computer file is encoded to Golay code Table \ref{DNA golay code}. The table consists of $243$ codewords from $729$ Golay codewords such that minimum hamming distance between any two codewords is $6.$ These $243$ codewords were assigned to $243$ ASCII values having highest probability of occurrences according to frequency table used by Goldman as described in supplementary file. Remaining $13$ ASCII values were assigned with codewords chosen randomly from remaining set of ternary Golay codewords i.e. these $13$ codewords will have minimum hamming distance $5$ with other $243$ codewords. This ensures that by maximum likelihood decoding we can correct up to $2$ trit flip error. Since there exists only $243$ codewords in set of $3^{11}$ codewords with minimum hamming distance $6,$ it is not possible to construct code consisting of minimum $256$ codewords such that length of codewords is $11$ and minimum hamming distance more than $5.$ The file information (i.e. file size and file extension) was encoded using same table and appended at the end of this string with Golay codes of comma and semicolon as separators. Next, this string of trits was converted to DNA such that only one of three nucleotides, different from the last one, was used to encode current bit into nucleotide. The resulting DNA sequence was split into segments (or chunks) of $99$ bases each as shown in Fig \ref{chunk} where $(i= 99$) is number of bases used for storing file content. Here, $\lambda$ is number of bases required to store file index number (no of file index trits = 2, thus allowing maximum of $9$ files to be distinguished), $\mu$ is number of bases required for storing chunk index (no of segment index trits $\mu   = \left \lceil \log_{3}{\mbox{( total no of segments)}} \right \rceil  )$  and one base for odd parity-check trit \cite{goldman2013towards} were appended at the end of each segment of $99$ bases. This parity is obtained by summing odd bits for file identifier and chunk index.
 \begin{figure}[h]
  \begin{center}
    \scalebox{0.4}[0.4]{\includegraphics{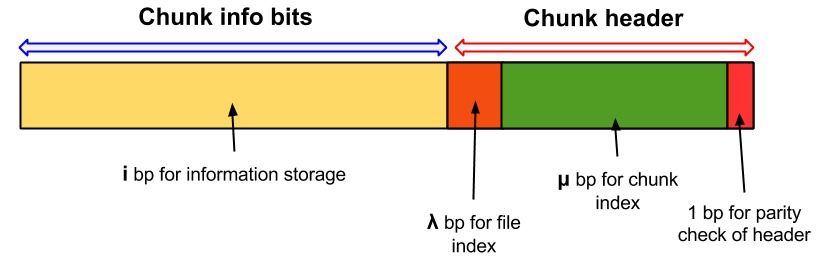}}
    \caption{Chunk architecture for DNA Golay code approach. There is two main parts of chunk, one is chunk info bits and other is chunk header. Chunk info bits contains original data information and chunk header has index. Chunk header includes file index for identification of file and chunk index to identify chunk. Last is appended odd parity check information.}
    \label{chunk}
  \end{center}
\end{figure}
%
 \begin{figure}[h]
  \begin{center}
    \scalebox{0.32}[0.32]{\includegraphics{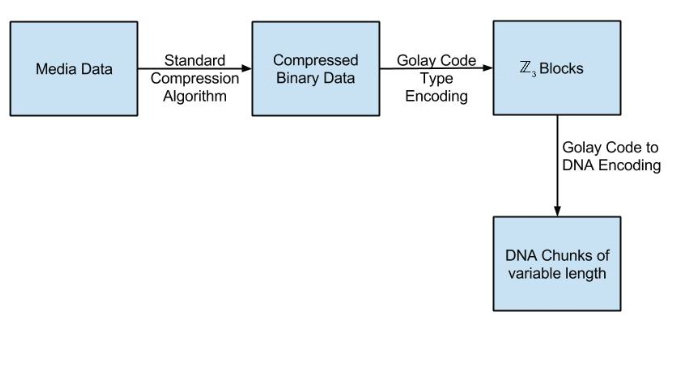}}
    \caption{Schematic flow diagram for DNA Golay code approach. Computer file is compressed using standard compression methods. Compressed binary file is converted to non linear family of codes by assigning each character to codeword. This ternary code is converted to DNA and DNA sequence is fragmented into chunk of variable length. No redundancy is added to chunk. Error correction can be done by nearest-neighbour likelihood decoding approach.}
    \label{golaycode}
  \end{center}
\end{figure}
\begin{figure}[h]
  \begin{center}
    \scalebox{0.22}[0.25]{\includegraphics{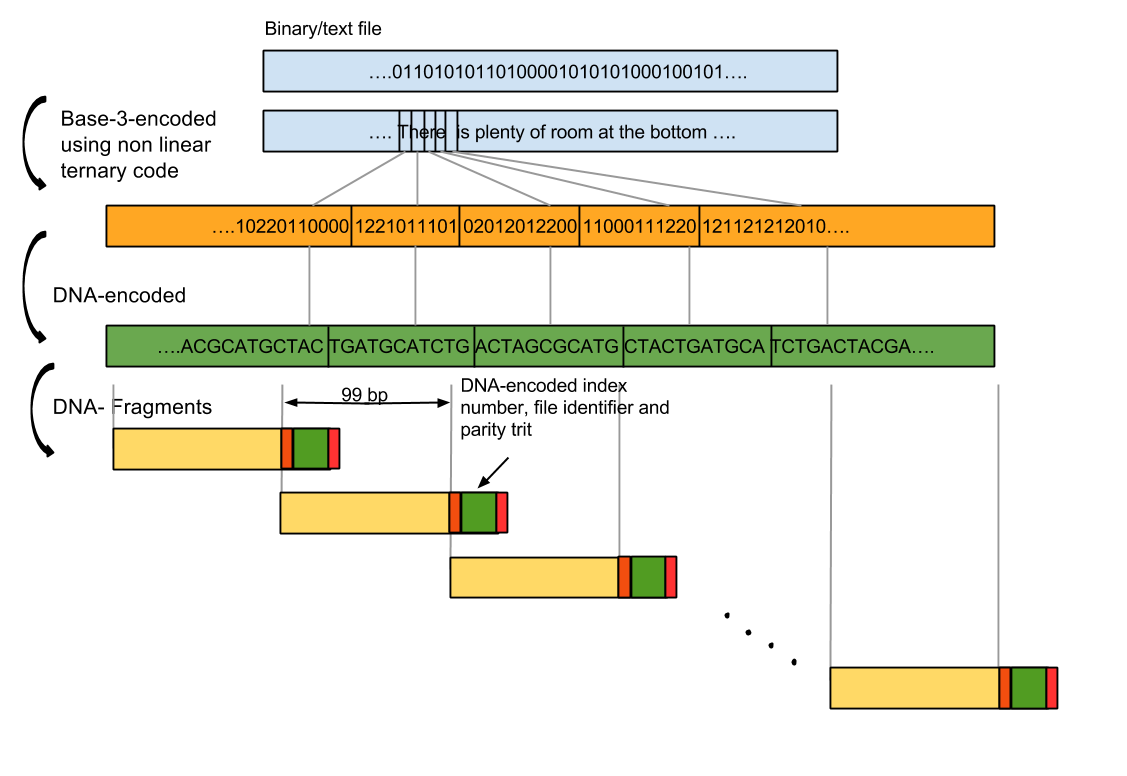}}
    \caption{Improved error correction approach for DNA data storage using DNA Golay code is developed here. It depicts the steps of conversion of file into DNA sequence by using DNA Golay code table. In first step, binary code (blue bar) was mapped to base 3 non linear ternary code (orange bar) where each byte was replaced with base 3 numbers called trits (for ternary Golay code with 11 trits). These trits were then converted to DNA code (green bar) by replacing each trit with one of the three nucleotides different from the previous one used to avoid homopolymers. Long DNA was divided into DNA chunks of length 99 base pairs without redundancy. Each DNA segment was appended with index, file identifier and parity trits.}
    \label{DNAgolay}
  \end{center}
\end{figure}
\begin{table}[ht]
\caption{Goldman base table designed to covert Huffman code to DNA nucleotides avoiding homo polymer}
\centering
\begin{tabular}{|l|l|l|l|}
\hline
&0&1&2 \\ \hline
A&C&G&T \\ \hline
C&G&T&A \\ \hline
G&T&A&C  \\ \hline
T&A&C&G \\ \hline
\end{tabular}
\label{basetable}
\end{table}

\begin{example}
To understand the encoding and decoding procedure, here is demonstrated simple example. Message data is "DA". Each letter of message is converted to ASCII characters. ASCII characters for each letter are 68 and 65 respectively. ASCII characters are converted to sub code of ternary Golay code of length 11. Codes corresponding to letter D is 02221221120 and for A is 10111000101 referred from table \ref{DNA golay code}. 
DNA sequence corresponding to message DNA is CATGATGCTGAGTCTCGTAGTC. This DNA sequence is divided to chunks $C_1$ is CATGATGCTGA and $C_2$ is GTCTCGTAGTC (here of length 11). Index of chunk i and file identifier (ID) is appended at the end of the message data. Let the chunk index for each DNA chunk be 0 and 1. File identifier for the file be 00. Also for checksum, parity bit P is added which can be calculated by summation of odd position trits in ID and i. Now each DNA chunk is appended by with chunk identifier bases (concatenating ID. i .P). Chunk identifier for each DNA chunk here is CGTA and CGAG respectively. So final DNA chunks are CATGATGCTGACGTA and GTCTCGTAGTCCGAG.
\end{example}
\subsection{Error Correction Analysis}
\begin{definition}
Let $\Sigma_{DNA} =\{A,T,G,C\}$ be the alphabet of DNA nucleotides.  A DNA code $\mathscr{C}_{DNA}(n,M)$ is  a sub set of $\Sigma_{DNA}^n$ of size $M$ with each codewords of length $n$.
\end{definition}
\begin{definition}
The Hamming distance $d_{H}(\bar{\textbf{x}}_{\textbf{DNA}}$,$\bar{\textbf{y}}_{\textbf{DNA}})$ between two DNA codewords $\bar{\textbf{x}}_{\textbf{DNA}}$ and $\bar{\textbf{y}}_{\textbf{DNA}}$ is the number of positions in which $\bar{\textbf{x}}_{\textbf{DNA}}$ and $\bar{\textbf{y}}_{\textbf{DNA}}$ differs. For instance, let $\bar{\textbf{x}}_{\textbf{DNA}}$ = ATGACT $\bar{\textbf{y}}_{\textbf{DNA}}$ = ATTAGC, then $d_{H}(\bar{\textbf{x}}_{\textbf{DNA}}$,$\bar{\textbf{y}}_{\textbf{DNA}})$ = 3.
\end{definition}
\begin{Lemma}\label{uniq}
Let $\phi$ be the map used for the conversion (using Table \ref{basetable}) of $\mathscr{C}_{Z_3}$ ternary code to $\mathscr{C}_{DNA}$ DNA code i.e, $\phi(\mathscr{C}_{Z_3})=\mathscr{C}_{DNA}$.
Observe that the map $\phi$ is not unique.
\label{map}
\end{Lemma}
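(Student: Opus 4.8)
The plan is to pinpoint exactly where freedom enters the construction of $\phi$ and then to exhibit it concretely. The conversion rule encoded by Table~\ref{basetable} is recursive: for an input ternary word $c_1 c_2 \cdots c_n \in \mathscr{C}_{Z_3}$, the output DNA word $\phi(c) = y_1 y_2 \cdots y_n \in \mathscr{C}_{DNA}$ satisfies $y_i = g(y_{i-1}, c_i)$ for $i \ge 2$, where $g(\cdot,\cdot)$ denotes the table lookup with the previous nucleotide indexing the row and the current trit indexing the column. First I would write this recursion down explicitly and observe that it pins down $y_2, \ldots, y_n$ as soon as $y_1$ is fixed, yet it is silent about $y_1$ itself, because the leading trit $c_1$ has no predecessor to index a row of the table.

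The key step is to supply that missing predecessor as a \emph{seed} nucleotide $s \in \Sigma_{DNA}$, so that $y_1 = g(s, c_1)$, and then to show that different seeds give different maps. Here I would use the one structural fact about Table~\ref{basetable} that matters: each of its three columns is a permutation of $\Sigma_{DNA}$. Equivalently, for every fixed trit value $t$ the assignment $s \mapsto g(s,t)$ is injective on $\Sigma_{DNA}$. Hence the four seeds $A, C, G, T$ produce four pairwise distinct values of $y_1$, so the four candidate outputs for the \emph{same} input word already disagree in their first coordinate. This shows that $\phi$ is not a single well-defined function but (at least) one of four distinct maps, which is exactly the assertion that $\phi$ is not unique.

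To close, I would tie the argument to the worked example: the DNA word beginning CATGATGCTGA\ldots corresponds to the seed $s = A$, since $g(A,0) = C$ reproduces its first nucleotide, whereas the seeds $C$, $G$, $T$ would have launched the same ternary word with $G$, $T$, $A$ respectively. I expect the only real obstacle to be definitional rather than computational: since the table is fixed by hypothesis, one must resist hunting for the non-uniqueness inside the table and instead recognise that it resides wholly in the unspecified boundary condition of the recursion. Once the seed is isolated as the sole free parameter, the column-permutation property makes the distinctness of the four maps immediate.
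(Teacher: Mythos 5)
Your proposal is correct and matches the paper's intent: the paper's own proof is literally just ``Easy to observe,'' and the observation it has in mind is precisely the one you make explicit, namely that the recursion $y_i = g(y_{i-1},c_i)$ leaves the seed nucleotide for $y_1$ unspecified, and the column-permutation property of Table~\ref{basetable} makes the four seeded maps pairwise distinct. Your write-up simply supplies the details the paper omits, including the consistency check against the worked example.
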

\begin{proof} Easy to observe.
\end{proof}
\begin{Lemma}
 Let $\bar{\textbf{x}}_{\textbf{DNA}}$ $\in \mathscr{C}_{DNA}$ send through a noisy  channel and $\bar{\textbf{y}}_{\textbf{DNA}}$ is received. If 
 $d_{H}(\bar{\textbf{x}}_{\textbf{DNA}}$,$\bar{\textbf{y}}_{\textbf{DNA}})$ $=$ $t$  and corresponding ternary code obtained using Goldman's base Table \ref{basetable} for $\bar{\textbf{x}}_{\textbf{DNA}}$  and  $\bar{\textbf{y}}_{\textbf{DNA}}$ are  $\bar{\textbf{x}}$ and $\bar{\textbf{y}}$ respectively then $t <$ $d_{H}(\bar{\textbf{x}}$,$\bar{\textbf{y}})$ $\leq$ $2t$.
 \label{distance}
\end{Lemma}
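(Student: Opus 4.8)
\emph{Proof proposal.} The plan is to track how each symbol change in the received DNA word propagates to the decoded ternary word, using the combinatorial structure of the base Table~\ref{basetable}. Write $X_i$ (resp.\ $Y_i$) for the nucleotides of $\bar{\textbf{x}}_{\textbf{DNA}}$ (resp.\ $\bar{\textbf{y}}_{\textbf{DNA}}$) and $x_i$ (resp.\ $y_i$) for the trits of $\bar{\textbf{x}}$ (resp.\ $\bar{\textbf{y}}$), and let $\delta(p,c)\in\{0,1,2\}$ denote the trit that the table assigns to a nucleotide $c$ preceded by $p$, so that $\phi$ inverts $\delta$. Recall that in this scheme the $i$-th decoded trit depends \emph{only} on the consecutive pair $(X_{i-1},X_i)$, where $X_0$ is a fixed common seed shared by $\bar{\textbf{x}}_{\textbf{DNA}}$ and $\bar{\textbf{y}}_{\textbf{DNA}}$ (in the worked example the seed is $A$). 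First I would record the key structural fact that Table~\ref{basetable} is a Latin square: for every fixed $p$ the map $c\mapsto\delta(p,c)$ is a bijection onto $\{0,1,2\}$ (the rows), and for every fixed $c$ the map $p\mapsto\delta(p,c)$ is a bijection onto $\{0,1,2\}$ (the columns). Both are immediate from the table and are the engine of the whole argument.

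For the upper bound $d_{H}(\bar{\textbf{x}},\bar{\textbf{y}})\le 2t$ I would argue by support counting. Let $D=\{\,j:X_j\neq Y_j\,\}$, so $|D|=t$. A given nucleotide position $j$ is consulted by the decoder exactly twice: as the current symbol of trit $j$ and as the previous symbol of trit $j+1$. Since the seed is common, any trit on which $\bar{\textbf{x}}$ and $\bar{\textbf{y}}$ disagree must therefore sit at a position in $D\cup(D+1)$, a set of size at most $2t$, and the bound follows at once.

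For the lower bound I would localise the guaranteed trit errors at the \emph{boundaries} of each run of channel errors. Decompose $D$ into maximal blocks of consecutive positions. For an interior block $[a,b]$ (with $b<n$) the previous symbol of trit $a$ is unchanged while its current symbol changes, so the row-bijection forces $x_a\neq y_a$; similarly the current symbol of trit $b+1$ is unchanged while its previous symbol changes, so the column-bijection forces $x_{b+1}\neq y_{b+1}$. Hence each interior block contributes at least two trit disagreements, and since distinct blocks are separated by unchanged positions these contributions never collide. In the regime the lemma is really about --- sparse substitution errors, where every erroneous position is isolated --- this is already sharp: an isolated interior error at position $j$ flips \emph{exactly} the two trits $x_j$ and $x_{j+1}$, giving $d_{H}(\bar{\textbf{x}},\bar{\textbf{y}})=2t$ and in particular $t<d_{H}(\bar{\textbf{x}},\bar{\textbf{y}})$ whenever $t\ge 1$.

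The step I expect to be the main obstacle is upgrading the boundary count to the \emph{strict} inequality $t<d_{H}(\bar{\textbf{x}},\bar{\textbf{y}})$ for arbitrary error patterns, i.e.\ controlling the \emph{interior} of a long block. When two changed nucleotides are adjacent, the intervening interior trit is read from a pair in which \emph{both} coordinates change, and the Latin-square property alone does not preclude $\delta(X_{i-1},X_i)=\delta(Y_{i-1},Y_i)$; a short case check shows such coincidences do occur, so the two guaranteed boundary flips of a block need not outnumber the $b-a+1$ nucleotide changes inside it, and likewise a block reaching the final position $n$ contributes only one flip. I therefore expect the clean statement to rest on the standing assumption that channel errors are isolated (no two adjacent and none in the final position), under which the argument above yields the result; relaxing that assumption is the only genuinely delicate point, and is where I would concentrate the effort.
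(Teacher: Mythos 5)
Your proposal bears no resemblance to the paper's proof for the simple reason that the paper has none: the entire argument given is ``Using Lemma \ref{uniq}'', and Lemma \ref{uniq} itself merely observes that the map $\phi$ is not unique. Your analysis is the only substantive one on the table, and its two positive halves are correct: since trit $i$ is a function of the pair $(X_{i-1},X_i)$ alone, the changed trits lie in $D\cup(D+1)$ and the upper bound $d_{H}(\bar{\textbf{x}},\bar{\textbf{y}})\le 2t$ follows; and the two Latin-square bijections of Table \ref{basetable} do force a trit flip at the left and right boundary of every maximal run of changed nucleotides (the right one only when the run stops before position $n$).

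The obstacle you flag at the end is not a weakness of your write-up but a genuine falsity of the lemma as stated, and you should say so outright rather than present it as a step still to be filled. Two concrete witnesses: (i) a single substitution in the \emph{last} nucleotide changes only trit $n$, giving $t=1=d_{H}(\bar{\textbf{x}},\bar{\textbf{y}})$ and violating the strict inequality; (ii) the paper's own worked example, $\bar{\textbf{x}}_{\textbf{DNA}}=GTCTCGTAGTC$ versus $\bar{\textbf{y}}_{\textbf{DNA}}=GAGTCGTAGTC$, has two \emph{adjacent} changed nucleotides and yields $d_{H}(\bar{\textbf{x}}_{\textbf{DNA}},\bar{\textbf{y}}_{\textbf{DNA}})=2=d_{H}(\bar{\textbf{x}},\bar{\textbf{y}})$, again contradicting $t<d_{H}(\bar{\textbf{x}},\bar{\textbf{y}})$. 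Longer runs are worse still: with seed $A$, the words $CGT$ and $GTA$ decode to $000$ and $100$, so three nucleotide errors collapse to a single trit error and even $t\le d_{H}(\bar{\textbf{x}},\bar{\textbf{y}})$ fails --- this is just the sequential encoding propagating one early trit change through all later nucleotides. So the correct conclusion of your analysis is: the upper bound always holds; the equality $d_{H}(\bar{\textbf{x}},\bar{\textbf{y}})=2t$ (hence the strict lower bound for $t\ge 1$) holds exactly under your standing assumption that the channel errors are isolated and avoid the final position; and without that assumption the lemma needs to be restated, not reproved. Make that assumption explicit in the statement rather than hoping to remove it.
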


\begin{proof} Using Lemma \ref{uniq}.
\end{proof}
\begin{example}
let $\bar{\textbf{x}}_{\textbf{DNA}} = GTCTCGTAGTC$ and $\bar{\textbf{y}}_{\textbf{DNA}} = GAGTCGTAGTC$ then $\bar{\textbf{x}} = 10111000101 $  and $\bar{\textbf{y}} = 11101000101 $. The minimum hamming distance $d_{H}(\bar{\textbf{x}}_{\textbf{DNA}}$,$\bar{\textbf{y}}_{\textbf{DNA}}) = 2$ and $d_{H}(\bar{\textbf{x}}$,$\bar{\textbf{y}})= 2$.
\end{example}
%
%
%
\begin{Lemma}
Using double layer error correcting scheme, we can correct any 2 bit flips in DNA.
\end{Lemma}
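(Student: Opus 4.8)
The plan is to track how two substitution errors at the DNA level propagate through the base table $\phi$ to errors at the ternary level, and then to play off the two layers of the scheme against each other: Layer (i) is the differential encoding $\phi$ relating $\mathscr{C}_{DNA}$ to $\mathscr{C}_{Z_3}$ (Table \ref{basetable}), and Layer (ii) is the Golay sub code $(11,256,5)_{3}$ itself, which by construction detects $4$ and corrects $2$ trit errors. The first thing I would do is make the propagation behind Lemma \ref{distance} fully explicit. Assigning $A=0,C=1,G=2,T=3$, the table is exactly the rule $n_{j}=(n_{j-1}+t_{j}+1)\bmod 4$, so decoding reads $t_{j}=(n_{j}-n_{j-1}-1)\bmod 4$. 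A single DNA substitution at position $i$ therefore perturbs only the two decoded trits $t_{i}$ and $t_{i+1}$, since the faulty base is read once as the current symbol and once as the previous symbol; this is the content of Lemma \ref{distance} with $t=1$.

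Next I would split on the relative position of the two flips. If the flips sit at positions $i$ and $j$ with $j>i+1$, their influence windows $\{i,i+1\}$ and $\{j,j+1\}$ are disjoint and we may see up to four trit errors; if they are adjacent, $j=i+1$, the overlap at coordinate $i+1$ can cancel (the two perturbations there are $-e_{1}$ and $+e_{2}$), leaving as few as two trit errors, exactly the phenomenon displayed in the Example following Lemma \ref{distance}. In the adjacent/overlapping regime the number of trit errors never exceeds the correction radius of the Golay sub code, so Layer (ii) applied to the decoded ternary word returns the transmitted codeword and we are done.

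The hard part will be the non-adjacent regime, where up to four trit errors exceed the distance-$5$ correction radius and Layer (ii) alone cannot succeed. Here I would bring in Layer (i): by Lemma \ref{uniq} and the prefix-sum structure of $\phi$ the four trit errors are not arbitrary but occur as two length-$2$ bursts with linked values, and I would decode by reverting each of the $\binom{11}{2}\cdot 3^{2}$ candidate two-flip patterns in the DNA domain and testing membership in $\mathscr{C}_{DNA}$, retaining only preimages within DNA-distance $2$ of $\bar{\textbf{y}}_{\textbf{DNA}}$. The main obstacle, and the step I expect to need genuine care, is proving uniqueness of that surviving preimage: the raw distance bound of Lemma \ref{distance} only certifies that the minimum DNA distance of $\mathscr{C}_{DNA}$ is at least $3$, which by itself guarantees single-error correction, so the two-error guarantee must be extracted from the combined (double-layer) constraints—namely that no two distinct codewords of $\mathscr{C}_{DNA}$ can both lie within DNA-distance $2$ of a common received word once the two-burst structure forced by $\phi$ is imposed. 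Establishing this burst-constrained separation for the specific $243+13$ codewords of Table \ref{DNA golay code} is where the real work lies.
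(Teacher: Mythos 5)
Your reconstruction of how substitutions propagate through the base table is far more explicit than anything in the paper: the paper's entire proof consists of the assertion that the two-flip case ``can be proved by contradiction and Lemma 1 and 2,'' so there is no detailed argument to compare yours against. Judged on its own terms, though, your proposal contains one step that fails and one step that is missing. The failing step is the adjacent case. With flips at DNA positions $i$ and $i+1$ of sizes $\delta_i,\delta_{i+1}\bmod 4$, the decoded trits at $i$, $i+1$, $i+2$ are perturbed by $\delta_i$, $\delta_{i+1}-\delta_i$, $-\delta_{i+1}$ respectively; the middle term cancels only when $\delta_i=\delta_{i+1}$, so adjacent flips generically produce \emph{three} trit errors, not two. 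Three errors exceed the radius-$2$ bounded-distance decoder of a minimum-distance-$5$ (or even $6$) code, and since $3+2=5$ the corrupted ternary word may lie within distance $2$ of a different codeword, so your claim that Layer (ii) alone finishes the adjacent regime is false.

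The missing step is the one you flag yourself, and it is not a deferrable verification --- it is the entire content of the lemma. As you correctly observe, Lemma \ref{distance} only forces the DNA code to have minimum distance at least $\lceil 5/2\rceil = 3$, and distance $3$ is genuinely attainable under the map of Lemma \ref{uniq}: a ternary difference pattern on five consecutive coordinates whose partial sums modulo $4$ vanish at two of them (for instance perturbations $2,2,1,1,2$, with partial sums $2,0,1,2,0$) collapses ternary distance $5$ to DNA distance $3$. A DNA code of minimum distance $3$ cannot correct two substitutions by \emph{any} decoder, double-layer or otherwise, because two flips of one codeword can land within distance $1$ of another. Hence the lemma simply does not follow from Lemmas \ref{uniq} and \ref{distance}; it requires checking, for the specific $256$ codewords of Table \ref{DNA golay code} and their images under $\phi$ started from each possible preceding base, that no such cancellation pattern occurs between actual codeword pairs, i.e.\ that the induced DNA code has minimum distance at least $5$. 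Neither your proposal nor the paper performs that check, so the statement remains unproved and is possibly false as stated.
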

\begin{proof}
Correction of 1 bit flip in DNA is straightforward using Lemma \ref{uniq}. In case of 2 bit flips we can prove it by contradiction and Lemma 1 and 2.
\end{proof}
\begin{remark}
If error occurs in chunk header, error cannot be corrected and data cannot be retrieved.
\end{remark}
\section{Analysis}
Performance of error correcting codes used for DNA data storage can be measured by Shannon information capacity, information ratio of base to bits. 
Family of non linear codes described here has many advantages over codes used for DNA data storage. It has properties like higher data density, higher DNA Shannon information, lower DNA storage cost and code rate almost $0.5$. We describe them now briefly.  
\subsection{DNA Information Density}
Amount of data that can be encoded in DNA can be quantified by DNA information density. \begin{definition} For a given DNA based information storage system, DNA information density is total amount of data that can be stored in unit gram of DNA.\end{definition} At theoretical maximum, one gram of single stranded genetic code can store $455$ EB (exabytes) of information \cite{church2012next}. Goldman achieved information density $2.2$ PB (petabytes) per gram of DNA. Using DNA Golay codes, computationally we achieved information density for DNA based storage medium as $1.15 \times 10^{20} = 115$ EB (Exabytes) per gram DNA. 
\begin{Proposition}
DNA storage capacity using our chunk architecture for one gram of DNA is calculated by solving the following non-linear equation
$$
\begin{array}{c}
[(182 \times 10^{19} \times l) \div \left( \left( l+3\right)  + log_3 \left( \frac{N\left( x+22\right)}{l} \right) \times N \right)] -22\\
= x,
\end{array}
$$ where x= number of  bytes per one gram of DNA, l = Length of chunk without chunk index, N =  Length of the error correcting code.
\end{Proposition}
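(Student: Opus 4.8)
The plan is to obtain the displayed relation as a \emph{base-budget conservation equation} for the chunk layout of Fig.~\ref{chunk}, not as a logical consequence of the preceding lemmas. First I would identify the constant $182\times 10^{19}$ as the number of nucleotides in one gram of single-stranded DNA: with an average nucleotide molar mass of about $330$ g/mol and Avogadro's number $6.022\times 10^{23}$, one gram holds $6.022\times 10^{23}/330 \approx 1.82\times 10^{21} = 182\times 10^{19}$ bases. This fixes the total budget $B=182\times 10^{19}$ of bases that the encoding may use.

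Next I would track how that budget is consumed. By the trit-to-base substitution of Table~\ref{basetable} every trit becomes exactly one base, and each byte is encoded through the length-$N$ code, so a byte costs $N$ bases; together with the $22$ bytes of appended file metadata (size, extension, and the comma/semicolon separators), the payload occupies $N(x+22)$ bases. Since a chunk carries a fixed content length, the number of segments is $N(x+22)/l$, and the variable part of the header, the chunk index, has length $\mu=\lceil \log_3\bigl(N(x+22)/l\bigr)\rceil$; the remaining header overhead is the constant $\lambda+1$ for the file identifier and the odd parity trit. Dropping the ceiling gives the continuous surrogate $\mu=\log_3\bigl(N(x+22)/l\bigr)$ that appears in the statement.

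I would then write the conservation law ``bases emitted $=$ $B$,'' multiplying the per-chunk length by the number of chunks (equivalently, adding payload bases and total header bases), and solve the resulting relation for $x$. After isolating $x$ and subtracting the $22$ metadata bytes one arrives at
\[
x=\frac{182\times 10^{19}\times l}{(l+3)+\log_3\!\bigl(N(x+22)/l\bigr)\times N}-22 .
\]
The delicate step, and the one I expect to be the main obstacle, is precisely this accounting: one must be consistent about which quantities are measured in trits versus bases and about which header fields are folded into the fixed term $l+3$ versus the index term, since it is exactly this bookkeeping that forces the denominator into the stated form.

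Finally, the reason the claim is phrased as \emph{solving a non-linear equation} is that $x$ occurs simultaneously outside and inside the logarithm through $\mu$, so no closed form exists. To make the statement meaningful I would verify that the equation does pin down $x$: the difference of the two sides is continuous and strictly decreasing in $x$ (the capacity term decays like $1/\log x$ while the $-x$ term decays linearly), is positive for small $x$, and tends to $-\infty$, so by the intermediate value theorem there is a unique positive root, which is the asserted storage capacity.
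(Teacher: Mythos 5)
The paper offers no proof of this Proposition at all---it is stated bare and followed only by the numerical instantiation $l=99$, $N=11$, $x\approx 1.15\times 10^{20}$---so your base-budget conservation framing is the right (indeed the only plausible) reconstruction. Your identification of $182\times 10^{19}$ as the nucleotide count of one gram of single-stranded DNA, of $N(x+22)$ as the payload in bases, of $N(x+22)/l$ as the number of chunks, and of $3+\mu$ (file index $\lambda=2$, one parity base, plus the chunk index of length $\mu=\log_3(N(x+22)/l)$) as the per-chunk header overhead are all consistent with the chunk architecture of Fig.~\ref{chunk}.

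However, there is a genuine gap at exactly the step you flag as delicate: your accounting does not produce the formula you write down. ``Per-chunk length times number of chunks equals the budget'' reads $(l+3+\mu)\cdot N(x+22)/l = 182\times 10^{19}$, which solves to
\[
x=\frac{182\times 10^{19}\times l}{N\bigl[(l+3)+\log_3\bigl(N(x+22)/l\bigr)\bigr]}-22,
\]
with the factor $N$ multiplying the \emph{entire} bracket, whereas the displayed Proposition (and your final line) attaches $N$ only to the logarithm, giving denominator $(l+3)+N\mu$. The two are not equal, and the discrepancy is not cosmetic: with $l=99$, $N=11$ the corrected denominator $11\times(102+40)\approx 1562$ yields $x\approx 1.15\times 10^{20}$, matching the paper's announced 115 EB, while the literal denominator $102+11\times 40\approx 542$ yields $x\approx 3.3\times 10^{20}$, which contradicts the paper's own numerical claim. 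So a consistent derivation forces the conclusion that the Proposition's parenthesization is a typo; your proof cannot simultaneously follow your stated conservation law and terminate in the literal displayed equation. You should either derive the corrected formula and note explicitly that it, not the literal one, reproduces the quoted capacity, or exhibit an alternative accounting (none appears to exist) under which the $(l+3)$ term escapes the factor $N$. Your closing existence-and-uniqueness argument via monotonicity and the intermediate value theorem is sound and is a worthwhile addition that the paper itself lacks.
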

In case of DNA Golay codes, DNA storage capacity obtained is $1.15 \times 10^{20}$ (115 Exabytes) bytes per gram of DNA  by using $l = 99$ and $N =11$.
%
\subsection{DNA Storage Cost}
Cost of DNA synthesis and sequencing is a major limitation of DNA storage medium. Considering synthesis cost to be $\char36 0.05$ per base \cite{goldman2013towards}, total cost per MB for different file size (bytes) was  plotted by us. Using DNA Golay code, required amount of the DNA decreased which resulted in decreased cost. We also plotted the cost to store data using various non-linear ternary codes of our family.  We observe that with the increase in the amount of data, there is negligible increase in cost per unit data (1 MB) associated with DNA based storage using non linear family of codes developed by us. In contrary, cost for the DNA increases with the increase in data using Goldman method.
\subsection{Tradeoff for Code Rate }
The information rate of our encoding scheme is 0.73 bits per base by encoding 8 bits into 11 DNA bases per byte (i.e. 8/11 = 0.73 bits per base). It can be improved further to 0.89 by using 9 base per byte (i.e. 8/9 = 0.89 bits per base) by encoding each byte using code $(9, 256, 3)_{3}$, at the cost of reduction in error detection capacity to 2 trits and correction capacity to 1 trits per codeword. 
 It can be observed that error correction capacity t for each code increases linearly with increase in length of the code n. But with increase in these two parameters, there is declination in code rate. For reliable data storage, code rate is optimal with $(11, 256, 5)_{3}$ code, if one choose code with $(9, 256, 3)_{3}$, there is decrease in code rate. Hence there is open challenge for the construction of codes with optimal code rate for DNA based archival storage medium. Thus one can conclude that code of family $(11, 256, 5)_{3}$ is optimal code at present.
 \begin{table}[h]
\caption{Codewords from subcode of Ternary Golay Code i.e. $[11,6,5]_{3}$ assigned to 256 ASCII values is given in the table.}
\begin{center}
\begin{tabular}{|l|l|l|l|l|l|l|l|l|l|l|l|}
\hline
ASCII  &Golay codes & Wei- &ASCII & Golay codes& Wei-& ASCII & Golay codes&Wei-&ASCII &Golay codes & Wei-\\ 
Values&             & ght  &values&            &  ght  & values&            & ght   &values&       &ght\\ \hline\hline
86 & 00002111202 & 6&170 & 00001222101 &6&127& 00020220222 &6& 253&00022001121&6 \\ \hline
52 &00021112020 & 6&138 & 00010110111&6&41& 00012221010 &6&86&00011002212&6\\ \hline
42&00201010122 &6& 100 & 00200121021 &6&44&00202202220 &6& 250&00221200011&6\\ \hline
132 &00220011210 &6& 161 & 00222122112&9&98& 00211120200&6 & 8&00210201102&6\\ \hline
34 &00212012001&6& 10 & 00102020211 &6&149&00101101110&6 &87&00100212012&6 \\ \hline
21 &00122210100 &6& 74 & 00121021002 &6&36& 00120102201 &6& 69&00112100022&6\\ \hline
177 &00111211221 &9&20 & 00110022120&6&213& 02012212122&9&163&02011020021&6\\ \hline
229 &02010101220 &6& 255 & 02002102011 &6&197& 02001210210 &6& 133&02000021112&6\\ \hline
252 &02022022200 &6& 26 & 02021100102 &6&173&02020211001 &6&151&02210222211&9\\ \hline
82 &02212000110 &6& 75 & 02211111012&9 &37& 02200112100 &6&166&02202220002&6\\ \hline
191 &02201001201 &6&88& 02220002022 &6& 63& 02222110221&9& 68&02221221120 &9\\ \hline
150 &02111202000 &6&76 & 02110010202&5&4& 02112121101 &9& 154&02101122222 &9\\ \hline
234&02100200121 &6&22 & 02102011020&6 &162&02121012111&9& 105&02120120010&6\\ \hline
102 &02122201212&9 &171& 01021121211 &9&104& 01020202110&6 &169&01022010012&6\\ \hline
196& 01011011100 &6&208&01010122002&6& 84&01012200201&6&130&01001201022&6\\ \hline
146 &01000012221 &6& 72&01002120120&6 &16& 01222101000&6 & 66&01221212202&9\\ \hline
24 &01220020101 &6&106& 01212021222 &9& 223& 01211102121&9 & 58&01210210020&6\\ \hline
137 &01202211111 &9&73& 01201022010&6 & 101& 01200100212&6 & 168&01120111122&9\\ \hline
181&01122222021 &9&175&01121000220 &6& 251& 01110001011&6& 40&01112112210&9\\ \hline
140 &01111220112 &9& 17& 01100221200&6 & 83& 01102002102 &6&254 &01101110001&6\\ \hline
240&20121202122 &9&214& 20120010021&6 & 53& 20122121220 &9&202&20111122011 &9\\ \hline
25 &20110200210 &6&18&20112011112&9 & 247& 20101012200 &6& 174&20100120102&6\\ \hline
112 &20102201001 &6&89&20022212211&9 &210& 20021020110 &6& 217&20020101012&6\\ \hline
248 &20012102100 &6& 194&20011210002 &6&182& 20010021201&6&80&20002022022&6\\ \hline
79 &20001100221 &6& 195 &20000211120&6 &12& 20220222000 &6& 209&20222000202&6\\ \hline
165 &20221111101 &9&245&20210112222&9 &2& 20212220121 &9& 81&20211001020&6\\ \hline
38 &20200002111 &6& 141&20202110010 &6&211& 20201221212&9 & 239&22100111211&9\\ \hline
95 &22102222110 &9&43& 22101000012 &6&224&22120001100&6&203&22122112002&9\\ \hline
145 &22121220201&9 & 147&22110221022 &9&19& 22112002221 &9& 50&22111110120&9\\ \hline
136 &22001121000 &6& 107& 22000202202&6& 134& 22002010101&6 &109 &22021011222&9\\ \hline
153 &22020122121& 9&148& 22022200020 &6& 205& 22011201111 &9&212 &22010012010&6\\ \hline
54 &22012120212 &9& 241&22202101122&9& 156&22201212021&9 &115 &22200020220&6\\ \hline
116& 22222021011& 9& 78&22221102210 &9 & 67 & 22220210112 & 9 & 70 & 22212211200 &9 \\ \hline
178 &22211022102 &9& 159& 22210100001&6 & 142& 21112020000 &6&92 &21111101202&9\\ \hline
48&21110212101 &9& 90& 21102210222 &9& 218& 21101021121 &9&126 &21100102020&6\\ \hline
39 &21122100111 &9&219& 21121211010&9& 167&21120022212 &9&114 &21010000122&6\\ \hline
172&21012111021 &9& 14& 21011222220 &9& 120& 21000220011&6 &139 &21002001210&6\\ \hline
160&21001112112 &9& 33&21020110200 &6& 179& 21022221102&9&117&21021002001&6\\ \hline
225 &21211010211 &9&129&21210121110 &9& 183&21212202012 &9& 230&21201200100&6\\ \hline
35 &21200011002 &6&93&21202122201 &9& 6& 21221120022&9 &32&21220201221&9\\ \hline
56&21222012120 &9&158&10212101211 &9& 185& 10211212110&9 &47&10210020012&6\\ \hline
143 &10202021100 &6&123&10201102002 &6& 204& 10200210201 &6& 242&10222211022&9\\ \hline
111 &10221022221 &9&103&10220100120&6& 108& 10110111000&6& 9&10112222202&9\\ \hline
65 &10111000101 &6&249&10100001222 &6& 13&10102112121 &9& 180&10101220020&6\\ \hline
226&10120221111 &9&144&10122002010&6 & 15&10121110212 &9& 57&10011121122&9\\ \hline
128 &10010202021 &6&135&10012010220 &6& 243&10001011011&6 & 190&10000122210&6\\ \hline
207 &10002200112 &6&77&10021201200 &6& 45&10020012102 &6& 91&10022120001&6\\ \hline
192 &12221010000 &6&186&12220121202 &9& 216&12222202101&9 & 97&12211200222&9\\ \hline
118 &12210011121 &9&246&12212122020&9& 215&12201120111 &9& 51&12200201010&6\\ \hline
206 &12202012212 &9&184&12122020122 &9& 227&12121101021 &9& 233&12120212220&9\\ \hline
237 &12112210011&9 &188&12111021210 &9& 113&12110102112&9 & 49&12102100200&6\\ \hline
201 &12101211102&9& 155& 12100022001&6 & 222&12020000211&6&231 &12022111110&9\\ \hline
5 &12021222012 &9& 27& 12010220100 &6& 131& 12012001002 &6&164 &12011112201&9\\ \hline
3 &12000110022&6& 46& 12002221221&9 & 119& 12001002120&6 &28 &11200222122&9\\ \hline
176 &11202000021 &6 &23& 11201111220&9 &64&11220112011&9 &157 &11222220210&9\\ \hline
187&11221001112& 9&244& 11210002200&6 & 238& 11212110102 &9&96 &11211221001&9\\ \hline
235&11101202211 &9& 60&11100010110&6& 1&11102121012 &9&110 &11121122100&9\\ \hline
200 &11120200002&6& 221& 11122011201&9& 99&11111012022 &9&31 &11110120221&9\\ \hline
198 &11112201120 &9& 193&11002212000& 6&125&11001020202&6 &124 &11000101101&6\\ \hline
152 &11022102222& 9&122&11021210121&9& 71&11020021020&6 &94 &11012022111&9\\ \hline
220 &11011100010 &6& 29&11010211212&9& 199&00000201211 &5&61 &00000102122&5\\ \hline
11 &00002012110 &5&228&00002210021&5 & 62&00001021220 &5& 55&00001120012&5\\ \hline
121 &00020121100&5&7&00020022011 &5&30&00022100210 &5& 232&00022202002&5\\ \hline
189 &00021010201 &5&59&00010212200 &5& 236&00010011022&5 & 0&00000000000&0\\ \hline
\end{tabular}
\end{center}
\label{DNA golay code}
\end{table} 

\section{Conclusions}
In this paper, we introduced non linear family of DNA codes  with different parameters to develop efficient DNA based archival information storage system. Using non linear ternary Golay code (DNA Golay Code), we are able to correct 2 flip errors per block of 11 nucleotides along with significant improvement in storage capacity. Using this approach, we obtained significant difference in data storage density and length of DNA required when compared to Goldman's et al  approach. We also developed a software DNA cloud 2.0 which implements our encoding schemes available at http://www.guptalab.org/dnacloud/. We hope our codes generate interest in coding theory community to construct better coding schemes for this upcoming area. 
\bibliographystyle{IEEEtran} 
\bibliography{errorcorrection}
%
%


\end{document}